\newtheorem{theorem}{Theorem}[section]
\newtheorem{proposition}[theorem]{Proposition}
\newtheorem{lemma}[theorem]{Lemma}
\newtheorem{corollary}[theorem]{Corollary}
\newtheorem{remark}[theorem]{Remark}
\newcommand{\tr}{{\rm Tr\hskip -0.2em}~}
\DeclareMathOperator{\frechetdiff}{\mathit d}
\newcommand{\fd}[1]{\frechetdiff\hskip -0.3em{#1}}
\newcommand{\sfd}[1]{\frechetdiff^2\hskip -0.3em{#1}}
\newcommand{\mean}{\mathbb E\hskip 1pt}
\begin{document}

\title{Multivariate extensions\\of the\\Golden-Thompson inequality}
\author{Frank Hansen}
\date{June 22, 2014}

\maketitle

\begin{abstract} We study concave trace functions of several operator variables and formulate and prove multivariate generalisations of the Golden-Thompson inequality. The obtained results imply that certain functionals in quantum statistical mechanics have bounds of the same form as they appear in classical physics.\\[2ex]
{\bf MSC2010} classification: 47A63.\\[0.5ex]
{\bf{Key words and phrases:}}  Golden Thompson's inequality; multivariate trace inequality; concave trace function.

\end{abstract}

\section{Introduction}

The Golden-Thompson inequality, which is of importance in statistical mechanics and in the theory of random matrices, states that
\[
\tr e^{L+B}\le \tr e^L e^B
\]
for arbitrary self-adjoint matrices $ L $ and $ B. $ It is known that there is no direct extension of this inequality to more operator variables, and there is an extensive literature investigating these matters, cf. \cite{kn:hiai:1993, kn:ando:1994:2, kn:forrester:2014} and the references therein.

We prove, among other statements, the following extension of the Golden-Thompson inequality. Consider $ n\times m $ matrices  $ H_1,\dots,H_k $ with
\[
H_1^*H_1+\cdots+H_k^*H_k= 1_n\,.
\]
Then the inequality
\begin{equation}\label{Main result}
\tr\exp\Bigl(L+\sum_{i=1}^k H^*_i B_i H_i\Bigr)\le\tr\exp\bigl(L) \sum_{i=1}^k H_i^* (\exp B_i) H_i
\end{equation}
is valid for arbitrary self-adjoint $ n\times n $ matrices $ L $ and $ B_1,\dots, B_k\,. $ This is, for $ n=m, $ the same bound as obtained when all the matrices commute. We are thus allowed to estimate partition functions or the Helmhotz function in quantum statistical mechanics and obtain bounds on the same form as they appear in classical physics. 

We obtain, as a simple special case, the inequality
\begin{equation}\label{inequality on simpel form}
\tr \exp\bigl(L+{\textstyle\frac{1}{2}}A+{\textstyle\frac{1}{2}}B\bigr)\le\tr (\exp L)\bigl(\textstyle{\frac{1}{2}}\exp A+\textstyle{\frac{1}{2}} \exp B\bigr)
\end{equation}
valid for arbitrary self-adjoint matrices $ L, $ $ A $ and $ B. $ Notice that (\ref{inequality on simpel form}) reduces to the Golden Thompson inequality for $ A=B $ and to convexity under the trace of the exponential function for $ L=0. $ The inequality may thus be considered as an interpolation inequality between Golden-Thompson's inequality and Jensen's inequality. However, we cannot derive  (\ref{inequality on simpel form}) from these special cases. If we first apply Golden-Thompson's inequality then we obtain
\[
\tr \exp\bigl(L+{\textstyle\frac{1}{2}}A+{\textstyle\frac{1}{2}}B\bigr)\le\tr (\exp L)\exp\bigl(\textstyle{\frac{1}{2}}A+\textstyle{\frac{1}{2}} B\bigr),
\]
but this is insufficient to obtain (\ref{inequality on simpel form}) since  $ L $ is arbitrary and the exponential function is not operator convex.

\section{Preliminaries}

The following lemma is both well-known and very useful. We include the proof for the benefit of the reader.

\begin{lemma}\label{differential inequality}
Let $ \varphi\colon\mathcal D\to\mathcal A_\text{sa} $ be a map defined in a convex cone $ \mathcal D\subseteq X $ of a Banach space $ X $ with values in the self-adjoint part of a $ C^* $-algebra $ \mathcal A. $ If $ \varphi $  is Fréchet differentiable, convex and positively homogeneous then
\[
d\varphi(x)h\le\varphi(h)\qquad x,h\in\mathcal D,
\]
where $ \fd{}\varphi(x) $ denotes the Fréchet differential of $ \varphi(x). $
\end{lemma}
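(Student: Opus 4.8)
The statement is a standard "convexity plus homogeneity gives subadditivity of the differential" fact, and the natural route is to exploit positive homogeneity to trade a difference quotient at the point $x$ for the value $\varphi(h)$ itself. First I would fix $x,h\in\mathcal D$ and, for $t\in(0,1]$, look at the point $x+th\in\mathcal D$ (legitimate since $\mathcal D$ is a convex cone). Writing $x+th = (1-t)x + t(x+h)$, convexity of $\varphi$ yields
\[
\varphi(x+th)\le (1-t)\varphi(x) + t\,\varphi(x+h).
\]
Rearranging gives the difference quotient bound
\[
\frac{\varphi(x+th)-\varphi(x)}{t}\le \varphi(x+h)-\varphi(x),
\]
where the inequality is in the order of $\mathcal A_{\text{sa}}$ and the quotient makes sense because the left-hand side is self-adjoint.

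Next I would let $t\downarrow 0$. By Fréchet differentiability the left-hand side converges in norm to $\fd{}\varphi(x)h$; since the cone of positive elements of $\mathcal A$ is norm-closed, the limit inherits the inequality, so
\[
\fd{}\varphi(x)h\le \varphi(x+h)-\varphi(x).
\]
It remains to remove the $-\varphi(x)$ and replace $\varphi(x+h)$ by $\varphi(h)$. Here positive homogeneity enters: I would apply the inequality just obtained with $x$ replaced by $sx$ for $s>0$, using that $\fd{}\varphi$ is homogeneous of degree $0$ in its base point (differentiating $\varphi(sx)=s\varphi(x)$, or equivalently noting $\fd{}\varphi(sx) = \fd{}\varphi(x)$) while $\varphi(sx+h)-\varphi(sx)$ can be analysed as $s\to\infty$. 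Concretely, $\varphi(sx+h) = s\,\varphi(x + s^{-1}h)$, and by Fréchet differentiability $\varphi(x+s^{-1}h) = \varphi(x) + s^{-1}\fd{}\varphi(x)h + o(s^{-1})$, so $\varphi(sx+h)-\varphi(sx) = \fd{}\varphi(x)h + o(1)\to \fd{}\varphi(x)h$ — which just recovers the same bound and is not yet enough. The cleaner finish is the reverse scaling: apply the difference-quotient inequality with $h$ replaced by $h$ and $x$ replaced by $\varepsilon x$, then also use convexity directly between $0$-homogeneity endpoints; more simply, combine $\fd{}\varphi(x)h \le \varphi(x+h)-\varphi(x)$ with the observation that, by $0$-homogeneity of the differential, $\fd{}\varphi(x)h = \fd{}\varphi(\varepsilon x)h \le \varphi(\varepsilon x + h) - \varphi(\varepsilon x) = \varphi(\varepsilon x + h) - \varepsilon\varphi(x)$, and let $\varepsilon\downarrow 0$; continuity of $\varphi$ at interior points (implied by convexity and finiteness, or taken along the ray) gives $\varphi(\varepsilon x + h)\to\varphi(h)$, whence $\fd{}\varphi(x)h\le\varphi(h)$.

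The main obstacle is the limiting step $\varphi(\varepsilon x + h)\to\varphi(h)$: I need continuity of $\varphi$ on the ray from $h$ toward $h+x$ at the endpoint $\varepsilon=0$, and while convex functions on a Banach space are automatically continuous on the interior of their domain, here $\mathcal D$ is only a convex cone and $h$ may lie on its boundary. However, Fréchet differentiability of $\varphi$ was assumed, which in particular forces continuity of $\varphi$ everywhere on $\mathcal D$ (a map differentiable at a point is continuous there), so this difficulty evaporates and the limit is justified. One should also double-check that $\varepsilon x + h\in\mathcal D$ for small $\varepsilon>0$: since $\mathcal D$ is a convex cone containing $x$ and $h$, the sum $x+h\in\mathcal D$ and then $\varepsilon x + h = \varepsilon(x+h) + (1-\varepsilon)h\in\mathcal D$ for $\varepsilon\in[0,1]$, so every point used in the argument lies in the domain. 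Assembling these pieces gives $\fd{}\varphi(x)h\le\varphi(h)$ for all $x,h\in\mathcal D$, as claimed.
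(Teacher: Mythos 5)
Your proof is correct, but it reaches the conclusion by a genuinely different route from the paper's. The paper exploits positive homogeneity \emph{inside} the argument of $\varphi$ by writing $x+th=(1+t)\bigl(\tfrac{1}{1+t}x+\tfrac{t}{1+t}h\bigr)$, so that homogeneity and convexity together give $\varphi(x+th)\le\varphi(x)+t\varphi(h)$ in one line; dividing by $t$ and letting $t\downarrow 0$ then yields $\fd{}\varphi(x)h\le\varphi(h)$ with a single limit and no appeal to continuity beyond differentiability at $x$. You instead use only convexity (via the decomposition $x+th=(1-t)x+t(x+h)$) to derive the subgradient inequality $\fd{}\varphi(x)h\le\varphi(x+h)-\varphi(x)$, and bring in homogeneity only afterwards, through the $0$-homogeneity of the differential ($\fd{}\varphi(\varepsilon x)=\fd{}\varphi(x)$, obtained by differentiating $\varphi(\varepsilon x)=\varepsilon\varphi(x)$) combined with the limit $\varepsilon\downarrow 0$ in $\fd{}\varphi(x)h\le\varphi(\varepsilon x+h)-\varepsilon\varphi(x)$, justified by continuity of $\varphi$ at $h$, which indeed follows from the assumed differentiability there. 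Both arguments are sound: yours is longer, requiring two limiting steps and the extra fact about the differential, but it isolates the standard gradient inequality for convex maps as an intermediate statement of independent interest, whereas the paper's decomposition makes $\varphi(h)$ appear immediately. The exploratory detour via $s\to\infty$ in your write-up is, as you note yourself, inconclusive and should simply be deleted from a final version.
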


\begin{proof}
Since
\[
\begin{array}{rl}
\varphi(x+th)&=\displaystyle (1+t)\varphi\Bigl(\frac{1}{1+t}x+\frac{t}{1+t}h\Bigr)\\[2ex]
&\displaystyle\le (1+t)\Bigl(\frac{1}{1+t}\varphi(x)+\frac{t}{1+t}\varphi(h)\Bigr)\\[2.5ex]
&=\varphi(x)+t\varphi(h)
\end{array}
\]
for $ 0\le t\le 1, $ we obtain
\[
\frac{\varphi(x+th)-\varphi(x)}{t}\le \varphi(h)\qquad 0<t\le 1
\]
and thus $ \fd{}\varphi(x)h\le\varphi(h). $
\end{proof}

We refer to the monograph \cite{kn:flett:1980} for a general account of Fréchet differentiable mappings between Banach spaces. 

The logarithm is operator monotone with Lebesgue measure as representing measure, thus
\[
\log x=\int_0^\infty\left(\frac{1}{t+1}-\frac{1}{x+t}\right)\,dt\qquad x>0.
\]
Since
\[
x^{1/2}(x+h)^{-1}x^{1/2}=(1+x^{-1/2}hx^{-1/2})^{-1}
=1-x^{-1/2}hx^{-1/2}+o(h),
\]
we derive that $ \fd{} x^{-1}h=-x^{-1}hx^{-1}. $ Notice that $ h $ may be arbitrary as we are not using the functional calculus. Consequently
\[
\fd{}\log(x)h=\displaystyle\int_0^\infty (x+t)^{-1}h(x+t)^{-1}\,dt.
\]
We have thus obtained the following integral expression
\begin{equation}\label{Q(x,h)}
Q(x,h)=\tr h^*\,\fd{}\log(x)h=\tr\int_0^\infty h^* (x+t)^{-1}h(x+t)^{-1}\,dt.
\end{equation}
It follows from the integral expression that $ Q(x,h) $ is positively homogeneous in $ (x,h). $ Lieb proved that it is a convex function in two variables \cite[Theorem 3]{kn:lieb:1973:1}. But this is a reflection of a quite general result. Zhang and the author recently proved \cite{kn:hansen:2014:3} that for a strictly increasing continuously differentiable function $ f\colon(0,\infty)\to\mathbf R $ the form 
\[
(x,h)\to\tr h^*\fd{}f(x)h\qquad x>0
\]
is convex if and only if the derivative of $ f $ is operator convex and numerically decreasing. 

We shall now use 
Lemma~\ref{differential inequality}. Notice that $ Q(x,h) $ is defined in the cone $ \mathcal D=B(H)_+\times B(H), $ where $ H $ is a finite dimensional Hilbert space. Thus
\begin{equation}\label{inequality for dQ}
\fd{}Q(x,h)(y,k)\le Q(y,k)
\end{equation}
for positive definite $ x,y $ and arbitrary $ h,k. $

We end this section by giving a new result for the form $ Q $ that will prove crucial in the rest of the paper.

\begin{proposition}\label{Q-form and a contraction}
Let $ X $ be an invertible contraction. Then
\[
Q(XAX^*,B)\le Q(A,X^{-1}B(X^*)^{-1})
\]
for positive definite $ A $ and arbitrary $ B. $
\end{proposition}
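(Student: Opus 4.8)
The plan is to reduce both sides to the integral representation (\ref{Q(x,h)}) and then to invoke an elementary monotonicity property of the quadratic trace form $S\mapsto\tr C^*SCS$. First I would exploit the invertibility of $X$ to write, for every $t>0$,
\[
XAX^*+t=X\bigl(A+t(X^*X)^{-1}\bigr)X^*,
\]
which holds because $X(X^*X)^{-1}X^*=1$. Inverting and using $(X^*)^{-1}=(X^{-1})^*$ gives $(XAX^*+t)^{-1}=(X^*)^{-1}(A+tP)^{-1}X^{-1}$, where I set $P:=(X^*X)^{-1}$. Substituting this into $Q(XAX^*,B)=\tr\int_0^\infty B^*(XAX^*+t)^{-1}B(XAX^*+t)^{-1}\,dt$ and cycling one factor $X^{-1}$ around the trace, I obtain, with $C:=X^{-1}B(X^*)^{-1}$ (so that $C^*=X^{-1}B^*(X^*)^{-1}$),
\[
Q(XAX^*,B)=\tr\int_0^\infty C^*(A+tP)^{-1}C(A+tP)^{-1}\,dt .
\]
Since the right-hand side of the Proposition is precisely $Q(A,C)=\tr\int_0^\infty C^*(A+t)^{-1}C(A+t)^{-1}\,dt$ by (\ref{Q(x,h)}), it suffices to compare the two integrands pointwise in $t$.

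Here the contraction hypothesis enters: $X^*X\le 1$ forces $P=(X^*X)^{-1}\ge 1$, hence $A+tP\ge A+t$, and by operator monotonicity of the inverse $(A+tP)^{-1}\le(A+t)^{-1}$. The remaining point, and the one place a small argument is needed, is that $S\mapsto\tr C^*SCS$ is monotone on the positive cone: for $0\le S_1\le S_2$,
\[
\tr C^*S_2CS_2-\tr C^*S_1CS_1=\tr\bigl(C^*(S_2-S_1)C\bigr)S_2+\tr\bigl(C^*S_1C\bigr)(S_2-S_1)\ge 0,
\]
because each summand is the trace of a product of two positive semidefinite operators. Applying this with $S_1=(A+tP)^{-1}$ and $S_2=(A+t)^{-1}$ and integrating in $t$ yields the inequality.

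I expect the main thing to get right is the algebraic identity of the first step, where invertibility of $X$ is indispensable; everything after that is bookkeeping plus the two-line monotonicity lemma. As an alternative to that lemma one could instead differentiate $r\mapsto\tr C^*\bigl(A+t+rt(P-1)\bigr)^{-1}C\bigl(A+t+rt(P-1)\bigr)^{-1}$ over $r\in[0,1]$ using $\fd{}x^{-1}h=-x^{-1}hx^{-1}$ and check that the derivative is $\le 0$ (the relevant coefficient operator being of the form $S^{-1/2}(E^*E+EE^*)S^{-1/2}\ge 0$), but the monotonicity formulation is cleaner and, notably, makes no use of the differential inequality (\ref{inequality for dQ}).
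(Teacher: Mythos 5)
Your proof is correct and follows essentially the same route as the paper: both reduce to the integral representation, compare $(XAX^*+t)^{-1}$ with $(X^*)^{-1}(A+t)^{-1}X^{-1}$ via operator anti-monotonicity of the inverse (your $A+tP\ge A+t$ is the conjugate of the paper's $XAX^*+t\ge X(A+t)X^*$), and then use monotonicity of $S\mapsto\tr C^*SCS$ on the positive cone. The only difference is that you spell out that last monotonicity step, which the paper compresses into the phrase ``under the trace this inequality implies''; your two-line telescoping argument is exactly the justification that phrase is relying on.
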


\begin{proof}
We use the integral representation of the form $ Q $ and obtain
\[
Q(XAX^*,B)=\tr\int_0^\infty B^* (XAX^*+t)^{-1}B(XAX^*+t)^{-1}\,dt.
\]
Since $ X $ is a contraction we derive the inequality
\[
\frac{1}{XAX^*+t}\le\frac{1}{X(A+t)X^*}=(X^*)^{-1}(A+t)^{-1}X^{-1}.
\]
Under the trace this inequality implies
\[
\begin{array}{rl}
Q(XAX^*,B)&\le\displaystyle \tr\int_0^\infty B^* (X^*)^{-1}(A+t)^{-1}X^{-1} B (X^*)^{-1}(A+t)^{-1}X^{-1}\,dt\\[3ex]
&=Q(A,X^{-1}B(X^*)^{-1})
\end{array}
\]
which is the desired result.
\end{proof}

\section{Concave trace functions}

\begin{theorem}\label{Main concavity theorem} Let $ H $ be a contraction. Then the trace function
\[
\varphi(A)=\tr\exp\bigl(H^*(\log A) H\bigr)
\]
is concave in positive definite matrices.
\end{theorem}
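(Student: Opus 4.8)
The plan is to show concavity of $\varphi$ by computing its second Fréchet differential and showing it is negative semidefinite, using the machinery built up in the preliminaries — in particular the inequality $\fd{}Q(x,h)(y,k)\le Q(y,k)$ and Proposition~\ref{Q-form and a contraction}. First I would set $g(A)=H^*(\log A)H$, so that $\varphi(A)=\tr\exp g(A)$. Differentiating once along a self-adjoint direction $K$ gives, by the chain rule and the cyclicity of the trace, $\fd{}\varphi(A)K=\tr\bigl(\exp g(A)\bigr)H^*\bigl(\fd{}\log(A)K\bigr)H$. The natural next move is to differentiate again and recognise the resulting expression in terms of the form $Q$ evaluated at $g(A)$ together with a term involving the concavity of $A\mapsto\log A$.

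The key structural point I would exploit is the derivative-of-exponential formula $\fd{}\exp(Y)Z=\int_0^1 e^{sY}Z e^{(1-s)Y}\,ds$, or equivalently the fact that for the exponential the associated form $(Y,Z)\mapsto\tr Z^*\,\fd{}\exp(Y)Z$ is well understood. Carrying the second differentiation through, $\sfd{}\varphi(A)(K,K)$ will split into two contributions: one coming from differentiating $\exp g(A)$ (which produces a term of the shape $\tr\bigl(\fd{}\exp g(A)[H^*(\fd{}\log(A)K)H]\bigr)H^*(\fd{}\log(A)K)H$, a manifestly nonnegative "curvature of exp" term that we need to dominate), and one coming from differentiating $\fd{}\log(A)K$ (which, since $A\mapsto\log A$ is operator concave, contributes a negative term). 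The heart of the argument is to show that after taking the trace, the exponential curvature term is bounded above by the negative log-concavity term, i.e. that the form $(x,h)\mapsto \tr h^*\,\fd{}\log(x)h = Q(x,h)$, when pulled back through $H^*(\cdot)H$ and composed with $\exp$, is controlled exactly the way Proposition~\ref{Q-form and a contraction} controls $Q$ under a contraction.

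More precisely, after a change of variables $x=\exp g(A)$ the exponential-curvature term should be expressible as (an integral of) $Q\bigl(x,\, e^{s g(A)/?}H^*(\fd{}\log A\,K)H\,e^{\cdots}\bigr)$ type expressions, and the idea is that writing $x$ in the form $X A' X^*$ for a suitable contraction $X$ built from $H$ lets us apply Proposition~\ref{Q-form and a contraction} to trade $Q(XAX^*,B)$ for $Q(A,X^{-1}BX^{*-1})$, which is precisely what is needed to cancel against the $Q$-term arising from $\sfd{}\log$. The contraction hypothesis on $H$ is what makes $X$ a genuine contraction, so it is essential here. I expect the main obstacle to be the bookkeeping that identifies the exponential curvature term with a $Q$-form at the right argument: the Duhamel integral for $\fd{}\exp$ does not factor as cleanly as $\fd{}\log$, so one likely needs either a clever use of cyclicity of the trace under the $s$-integral, or an auxiliary representation (e.g. an integral formula writing $\tr Z^*\fd{}\exp(Y)Z$ in a form comparable to a $Q$-type expression). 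Once that matching is in place, Proposition~\ref{Q-form and a contraction} and the convexity inequality \eqref{inequality for dQ} for $Q$ close the estimate and yield $\sfd{}\varphi(A)(K,K)\le 0$, hence concavity.
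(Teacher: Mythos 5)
Your overall strategy is the paper's: compute $\sfd{}\varphi(A)(B,B)$, split it into the positive ``curvature of exp'' term $\tr \fd{}\exp(C)(D)D$ (with $C=H^*(\log A)H$ and $D=H^*(\fd{}\log(A)B)H$) and the log-concavity term $\tr E\,\sfd{}\log(A)(B,B)$ with $E=H\exp(C)H^*$, and then invoke the convexity inequality (\ref{inequality for dQ}) together with Proposition~\ref{Q-form and a contraction}. But the decisive step --- how these two ingredients actually combine --- is precisely the point you leave open (``I expect the main obstacle to be the bookkeeping\dots''), and the mechanism you sketch for it points the wrong way. The curvature term is indeed a $Q$-form, namely $\tr D\,\fd{}\exp(C)D=Q(\exp C,\fd{}\exp(C)D)$ because $\fd{}\log(\exp C)=\fd{}\exp(C)^{-1}$; but to dominate it by the log term you would need an \emph{upper} bound on $Q(\exp C,\cdot)$ in terms of $Q(H\exp(C)H^*,\cdot)$, which is the reverse of what Proposition~\ref{Q-form and a contraction} gives (it bounds $Q(XAX^*,B)$ above by $Q(A,X^{-1}B(X^*)^{-1})$ for a contraction $X$, and $H^{-1}$ is not a contraction). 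So the route you describe does not close as stated.

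The missing idea is a variational one applied to the \emph{log} term rather than the exp term. The chain rule gives
\[
\fd{}Q(A,B)(a,b)=\tr B\,\sfd{}\log(A)(B,a)+2\tr b\,\fd{}\log(A)B,
\]
so (\ref{inequality for dQ}) yields, for every positive definite $a$ and \emph{every} self-adjoint $b$,
\[
\tr a\,\sfd{}\log(A)(B,B)\le Q(a,b)-2\tr b\,\fd{}\log(A)B.
\]
Choosing $a=E$ and, crucially, the free direction $b=G=H\,\fd{}\exp(C)(D)H^*$ makes the cross term equal to $-2\tr D\,\fd{}\exp(C)D$ by cyclicity of the trace, whence
\[
\sfd{}\varphi(A)(B,B)\le Q(E,G)-\tr D\,\fd{}\exp(C)D,
\]
and only now does Proposition~\ref{Q-form and a contraction} enter, in its correct direction: $Q(E,G)=Q(H\exp(C)H^*,G)\le Q(\exp C,H^{-1}G(H^*)^{-1})=\tr D\,\fd{}\exp(C)D$. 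Without introducing the free parameter $b$ and making the choice $b=G$, there is nothing for the Proposition to act on, so your argument has a genuine gap at its central step even though the toolkit you assembled is the right one.
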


\begin{proof} We may without loss of generality assume that $ H $ is invertible. We calculate the first Fréchet differential
\[
\begin{array}{rl}
\fd{} \varphi(A)B &=\tr \fd{}\exp\bigl(H^*(\log A) H \bigr) (H^*(\fd{}\log (A)B) H)\\[1.5ex]
&=\tr \exp\bigl(H^*(\log A) H \bigr) (H^*(\fd{}\log (A)B) H),
\end{array}
\]
where we used the identity $ \tr\fd{}f(A)B=\tr f'(A)B $ valid for differentiable functions. We then
consider the following functions of the single operator variable $ A. $ 
\[
\begin{array}{rl}
 C&=H^*(\log A) H \\[1.5ex]
 D&=H^*(\fd{}\log(A)B) H=\fd{}_A(H^*(\log A)H)B=\fd{}_A(C)B \\[1.5ex]
 E&=H\exp(C) H^* = H\exp\bigl(H^*(\log A)H\bigr) H^* \\[1.5ex]
 G&=H \fd{}\exp(C)(D) H^* =\fd{}_C(H\exp(C)H^*)D=\fd{}_C(E)D.
 \end{array}
 \]
For clarity,  we use the notation $ \fd{}_A $ to indicate Fréchet differentiation with respect to $ A $ of compound expressions.
We proceed to calculate the second Fréchet differential
\[
\begin{array}{l}
\sfd{}\varphi(A)(B,B)=\fd{}_A(\fd{}\varphi(A)B)B=\fd{}_A (\tr\exp(C)D)B\\[1.5ex]
=\tr \fd{}\exp (C)( D)D +\tr \exp (C) H^*\sfd{}\log (A)(B,B)H \\[1.5ex]
=\tr \fd{}\exp (C)( D)D +\tr E\, \sfd{}\log(A)(B,B).
\end{array}
\]
We recall \cite{kn:flett:1980} that $ \varphi $ is concave if and only if $ \sfd{}\varphi(A)(B,B)\le 0 $ for positive definite $ A $ and self-adjoint $ B. $
To evaluate the second term we apply the chain rule to the form $ Q(A,B) $ and obtain
\[
\begin{array}{rl}
\fd{}Q(A,B)(a,b)&=\fd{}_1Q(A,B)a+\fd{}_2Q(A,B)b\\[1.5ex]
&=\tr B \sfd{}\log(A)(B,a)+\tr b \fd{}\log(A)B+\tr  B \fd{}\log(A) b
\end{array}
\]
for positive definite $ A,a $ and self-adjoint $ B,b. $ The integral representation in (\ref{Q(x,h)}) implies
$ \tr B \fd{}\log(A)b=\tr b \fd{}\log(A)B, $ and we therefore obtain
\[
\fd{}Q(A,B)(a,b)=\tr B \sfd{}\log(A)(B,a)+2\tr b \fd{}\log(A)B.
\]
By using (\ref{inequality for dQ}) we now obtain the inequality
\[
\begin{array}{rl}
\tr a \sfd{}\log(A)(B,B)&=\tr B \sfd{}\log(A)(B,a)\\[1.5ex]
&=\fd{}Q(A,B)(a,b)-2\tr b \fd{}\log(A)B\\[1.5ex]
&\le Q(a,b)-2\tr b \fd{}\log(A)B
\end{array}
\]
for positive definite $ A, a $ and self-adjoint $ B,b. $
Since $ E $ is positive definite we may put $ a=E $ and thus obtain
\[
\begin{array}{rl}
\sfd{}\varphi(A)(B,B)&=\tr \fd{}\exp (C)( D)D+\tr E\, \sfd{}\log(A)(B,B)\\[1.5ex]
&\le  \tr \fd{}\exp (C)( D)D+Q(E,b)-2\tr b \fd{}\log(A)B.
\end{array}
\]
By setting $ b=G $ we then obtain
\[
\sfd{}\varphi(A)(B,B)\le  \tr \fd{}\exp (C)( D)D+ Q(E,G) -2\tr G \fd{}\log(A)B
\]
for positive definite $ A $ and self-adjoint $ B. $ But since
\[
\begin{array}{rl}
\tr G \fd{}\log(A)B&=\tr H \fd{}\exp(C)(D) H^* \fd{}\log(A)B\\[1.5ex]
&=\tr H^*(\fd{}\log(A)B)H \fd{}\exp(C)D \\[1.5ex]
&=\tr \fd{}(H^*(\log A) H)B \fd{}\exp(C)D\\[1.5ex]
&=\tr(\fd{}(C)B)  \fd{}\exp(C)D\\[1.5ex]
&=\tr D \fd{}\exp(C)D,
\end{array}
\]
we obtain
\[
\sfd{}\varphi(A)(B,B)\le Q(E,G) - \tr D \fd{}\exp(C)D.
\]
We now apply Proposition~\ref{Q-form and a contraction} and obtain
\[
\begin{array}{rl}
Q(E,G)&=Q(H\exp(C)H^*,G)\\[1.5ex]
&\le Q(\exp C, H^{-1}G(H^*)^{-1})\\[1.5ex]
&=Q(\exp C, \fd{}\exp(C)(D))\\[1.5ex]
&=\tr \fd{}\exp(C)(D) \fd{}\log(\exp C)\fd{}\exp(C)(D).
\end{array}
\]
However, since the inverse of the linear map $ h\to\fd{}\exp(x)h $ is given by
\[
\fd{}\exp(x)^{-1}=\fd{}\log(\exp x),
\]
we realise that
\[
\fd{}\log(\exp C)\fd{}\exp(C)(D)=\fd{}\exp(C)^{-1}(\fd{}\exp(C) D)=D.
\]
Therefore,
\[
Q(E,G)\le \tr D \fd{}\exp(C) D
\]
and thus $ \sfd{}\varphi(A)(B,B)\le 0 $ for positive definite $ A $ and self-adjoint $ B. $ This shows that $ \varphi $ is concave.
\end{proof}

\begin{corollary}\label{Concavity corollary 1}
Consider $ n\times m $ matrices  $ H_1,\dots,H_k $ with
\[
H_1^*H_1+\cdots+H_k^*H_k\le 1_n
\]
where $ 1_n $ denotes the $ n\times n $ unit matrix. Then the trace function
\[
\varphi(A_1,\dots,A_k)=\tr\exp\bigl(H^*_1 (\log A_1) H_1+\cdots+ H^*_k (\log A_k) H_k\bigr)
\]
is concave in $ k $-tuples of positive definite $ n\times n $ matrices.  
\end{corollary}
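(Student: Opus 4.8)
The strategy is to reduce the multivariate statement to the single-variable Theorem \ref{Main concavity theorem} by a block-matrix (direct sum) construction. Given the $ n\times m $ matrices $ H_1,\dots,H_k $ with $ \sum_i H_i^*H_i\le 1_n, $ I would work on the Hilbert space $ \mathcal K = \underbrace{\mathbf C^m\oplus\cdots\oplus\mathbf C^m}_{k} $ and assemble a single operator $ H\colon\mathbf C^n\to\mathcal K $ by stacking the $ H_i $ as a column: $ H = (H_1,\dots,H_k)^{\mathrm t}, $ i.e. $ Hx = (H_1x,\dots,H_kx). $ Then $ H^*H = \sum_{i=1}^k H_i^*H_i\le 1_n, $ so $ H $ is a contraction. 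Correspondingly, given positive definite $ n\times n $ matrices $ A_1,\dots,A_k, $ form the block-diagonal positive definite operator $ A = A_1\oplus\cdots\oplus A_k $ on $ \mathcal K. $

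The key observation is then the identity
\[
H^*(\log A)H = \sum_{i=1}^k H_i^*(\log A_i)H_i,
\]
which holds because $ \log A = (\log A_1)\oplus\cdots\oplus(\log A_k) $ for block-diagonal $ A, $ and the action of $ H^*(\cdot)H $ on a block-diagonal operator $ D_1\oplus\cdots\oplus D_k $ is exactly $ \sum_i H_i^* D_i H_i. $ Consequently
\[
\varphi(A_1,\dots,A_k) = \tr\exp\Bigl(\sum_{i=1}^k H_i^*(\log A_i)H_i\Bigr) = \tr\exp\bigl(H^*(\log A)H\bigr) = \Psi(A),
\]
where $ \Psi $ is the single-variable trace function from Theorem \ref{Main concavity theorem} applied to the contraction $ H. $

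It remains to transfer concavity. Theorem \ref{Main concavity theorem} gives that $ \Psi $ is concave on positive definite operators on $ \mathcal K. $ The map $ (A_1,\dots,A_k)\mapsto A_1\oplus\cdots\oplus A_k $ is real-linear and sends $ k $-tuples of positive definite $ n\times n $ matrices into positive definite operators on $ \mathcal K; $ hence $ \varphi = \Psi\circ(\text{this linear embedding}) $ is concave as the composition of a concave function with a linear map. The only point requiring a word of care is that Theorem \ref{Main concavity theorem} is stated for matrices, i.e. for operators on a finite dimensional Hilbert space, and here the ambient space $ \mathcal K $ has dimension $ km, $ which is still finite, so the theorem applies verbatim; I expect no genuine obstacle, only the bookkeeping of checking that the block embedding intertwines $ H^*(\log A)H $ with $ \sum_i H_i^*(\log A_i)H_i $ as claimed.
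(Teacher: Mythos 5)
Your proposal is correct and follows essentially the same route as the paper: both form the block-diagonal operator $A=A_1\oplus\cdots\oplus A_k$ and a block-column contraction $H$ built from $H_1,\dots,H_k$, identify $\sum_{i} H_i^*(\log A_i)H_i$ with (a corner of) $H^*(\log A)H$, and conclude by Theorem~\ref{Main concavity theorem} composed with the linear block embedding. The only difference is that the paper pads $H$ with zero columns so that it is a square contraction on the large space --- a detail worth keeping, since the proof of Theorem~\ref{Main concavity theorem} begins by assuming $H$ invertible without loss of generality, which is unavailable for your rectangular $H$ --- and this padding merely shifts the trace by the harmless additive constant $(k-1)n$.
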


\begin{proof}
We set
\[
A=\begin{pmatrix}
     A_1     & 0     & \cdots   & 0\\
     0         & A_2 &             & 0\\
     \vdots &         & \ddots  & \vdots\\
     0        & 0       & \dots    & A_k
     \end{pmatrix}\qquad\text{and}\quad
H=\begin{pmatrix}
     H_1    & 0        & \cdots & 0\\
     H_2    & 0        & \cdots & 0\\
     \vdots & \vdots & \ddots & \vdots\\
     H_k    & 0         & \cdots & 0
     \end{pmatrix}
\]
with zero matrices of suitable orders inserted and notice that $ H $ is a contraction. Furthermore,
\[
H^*( \log A) H=\begin{pmatrix}
     \sum_{i=1}^k H_i^* (\log A_i) H_i    & 0       & \cdots   & 0\\
     0                                                     & 0        &  \cdots  & 0\\
     \vdots                                             & \vdots & \ddots   & \vdots\\
     0                                                     & 0        & \dots     & 0
     \end{pmatrix}.
\]
Thus
\[
\tr\exp\bigl(H^*(\log A) H\bigr)=\tr\exp\Bigl(\sum_{i=1}^k H^*_i(\log A_i) H_i\Bigr) +(k-1)n
\]
and the statement follows from Theorem~\ref{Main concavity theorem}.
\end{proof}

\begin{corollary}\label{Concavity corollary 2}
Consider $ n\times m $ matrices  $ H_1,\dots,H_k $ with
\[
H_1^*H_1+\cdots+H_k^*H_k\le 1_n
\]
and a self-adjoint $ n\times n $ matrix $ L. $ Then the trace function
\[
\varphi(A_1,\dots,A_k)=\tr\exp\Bigl(L+\sum_{i=1}^k H^*_i (\log A_i) H_i\Bigr)
\]
is concave in $ k $-tuples of positive definite $ n\times n $ matrices.  
\end{corollary}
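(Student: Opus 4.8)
The plan is to reduce this statement to Corollary~\ref{Concavity corollary 1} by absorbing the extra self-adjoint term $L$ into the sum $\sum_i H_i^*(\log A_i)H_i$ by means of one additional block. The idea is that adding $L$ inside the exponential is formally the same as adding a term $H_0^*(\log A_0)H_0$ in which $A_0$ is held fixed equal to $\exp L$ (or, more precisely, chosen so that $H_0^*(\log A_0)H_0 = L$), and then invoking concavity in the remaining variables.

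Concretely, first I would treat the case where $H_1^*H_1+\cdots+H_k^*H_k \le 1_n$ holds with strict inequality, say $\sum_{i=1}^k H_i^*H_i \le 1_n - \varepsilon 1_n$ for some $\varepsilon > 0$; the general case then follows by a limiting argument since concavity is preserved under pointwise limits. Under this strict inequality one may choose an $n\times m$ matrix $H_0$ with $H_0^*H_0 = 1_n - \sum_{i=1}^k H_i^*H_i$ (take $H_0 = (1_n - \sum H_i^*H_i)^{1/2}$ composed with an isometry of $\mathbf{C}^n$ into $\mathbf{C}^m$, valid when $m \ge n$; for $m < n$ one enlarges $m$ harmlessly since the corollary's hypothesis is about $n\times n$ matrices only). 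Then $H_0^*H_0 + H_1^*H_1 + \cdots + H_k^*H_k = 1_n$, so the $(k+1)$-tuple $(H_0,H_1,\dots,H_k)$ satisfies the hypothesis of Corollary~\ref{Concavity corollary 1}. Next, since $H_0^*H_0$ is invertible on its support, one can arrange that the map $C \mapsto H_0^* C H_0$ hits the prescribed $L$: choose $A_0$ positive definite with $H_0^*(\log A_0)H_0 = L$ — for instance, if $H_0 = U(1_n - \sum H_i^*H_i)^{1/2}$ with $U$ an isometry, set $\log A_0 = (1_n - \sum H_i^*H_i)^{-1/2}U^* L\, U (1_n - \sum H_i^*H_i)^{-1/2}$ on the range of $U$ and extend arbitrarily (positive definitely) elsewhere.

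With these choices, Corollary~\ref{Concavity corollary 1} gives that
\[
\psi(A_0,A_1,\dots,A_k) = \tr\exp\Bigl(H_0^*(\log A_0)H_0 + \sum_{i=1}^k H_i^*(\log A_i)H_i\Bigr)
\]
is jointly concave in $(k+1)$-tuples of positive definite matrices. Restricting a jointly concave function to an affine slice — here, freezing the first argument at the fixed value $A_0$ — preserves concavity, so $(A_1,\dots,A_k)\mapsto \psi(A_0,A_1,\dots,A_k) = \tr\exp(L + \sum_{i=1}^k H_i^*(\log A_i)H_i) = \varphi(A_1,\dots,A_k)$ is concave, as required. The main obstacle I anticipate is the technical bookkeeping around the dimension $m$ and the construction of $H_0$: one needs $m$ large enough to accommodate an isometric copy of $\mathbf{C}^n$ so that a square root with the right range exists, and one should check that enlarging $m$ (padding the given $H_i$ with zero columns) changes neither the hypothesis nor the function $\varphi$. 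None of this affects the value of the trace, so the argument goes through, but it is the only place where care is needed; the concavity itself is entirely inherited from Corollary~\ref{Concavity corollary 1}.
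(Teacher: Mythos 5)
Your proposal is correct and follows essentially the same route as the paper: the author also passes to the strict case $\sum_i H_i^*H_i<1_n$ by continuity, adjoins $H_{k+1}=\bigl(1_n-\sum_i H_i^*H_i\bigr)^{1/2}$, fixes $A_{k+1}=\exp\bigl(H_{k+1}^{-1}LH_{k+1}^{-1}\bigr)$ so that the extra term equals $L$, and invokes Corollary~\ref{Concavity corollary 1}. The only difference is cosmetic: the paper simply uses the positive definite $n\times n$ square root as the extra block, so your bookkeeping with an isometry into $\mathbf{C}^m$ is not needed.
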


\begin{proof}
By appealing to continuity we may without loss of generality assume 
\[
H_1^*H_1+\cdots+H_k^*H_k< 1_n
\]
and set $ H_{k+1}=\bigl(1_n-(H_1^*H_1+\cdots+H_k^*H_k)\bigr)^{1/2}. $ Then $ H_{k+1} $ is positive definite and since
\[
H_1^*H_1+\cdots+H_k^*H_k+H^2_{k+1}= 1_n
\]
we deduce from Corollary~\ref{Concavity corollary 1} that the trace function
\[
\begin{array}{l}
\varphi(A_1,\dots,A_k,A_{k+1})\\[1.5ex]
=\tr\exp\bigl(H^*_1 (\log A_1) H_1+\cdots+ H^*_k (\log A_k) H_k+H_{k+1} (\log A_{k+1})H_{k+1}\bigr)
\end{array}
\]
is concave in positive definite matrices. We keep $ A_{k+1} $ constant by setting
\[
A_{k+1}=\exp\bigl(H^{-1}_{k+1} L H_{k+1}^{-1}\bigr)
\]
and the statement now follows.
\end{proof}

\begin{remark}
Corollary~\ref{Concavity corollary 2} contains two celebrated theorems of Lieb. If we set $ k=1 $ and $ H=1 $ then the trace function
\[
\varphi(A)=\tr\exp(L+\log A)
\]
is concave in positive definite matrices, cf. \cite[Theorem 6]{kn:lieb:1973:1}. If $ H_1,\dots,H_k $ are chosen as square roots of positive numbers times the identity matrix then we obtain that the trace function
\[
\varphi(A_1,\dots,A_k)=\tr\exp(L+p_1 \log A_1+\cdots+p_k \log A_k),
\]
defined in positive definite matrices, is concave, where $ p_1,\dots,p_k $ are non-negative numbers with $ p_1+\cdots+p_k\le 1, $ cf. \cite[Corollary 6.1 (1)]{kn:lieb:1973:1}. 
\end{remark}

\begin{corollary}
Let $ L $ be a fixed self-adjoint matrix, and let $ A_1,\dots,A_k $ be random self-adjoint matrices. Then the inequality
\[
\mean\tr\exp\Bigl(L+\sum_{i=1}^k H_i^* A_i H_i\Bigr)\le \tr\exp\Bigl(L+\sum_{i=1}^k H_i^*(\log\mean e^{A_i}) H_i\Bigr)
\]
holds 
for fixed matrices $ H_1,\dots,H_k $ with $ H_1^*H_1+\cdots+H_k^*H_k\le 1. $
\end{corollary}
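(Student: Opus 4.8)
The plan is to deduce this corollary directly from Corollary~\ref{Concavity corollary 2} by a standard concavity-plus-Jensen argument. First I would observe that Corollary~\ref{Concavity corollary 2} asserts that the map
\[
(A_1,\dots,A_k)\mapsto\tr\exp\Bigl(L+\sum_{i=1}^k H_i^*(\log A_i)H_i\Bigr)
\]
is concave on $k$-tuples of positive definite matrices. Given random self-adjoint matrices $A_1,\dots,A_k$, I would set $X_i=e^{A_i}$, which are random positive definite matrices, and apply the operator Jensen inequality for concave functions (in its expectation form): for a concave function $\Phi$ defined on a convex set of tuples of positive definite matrices, $\mean\,\Phi(X_1,\dots,X_k)\le\Phi(\mean X_1,\dots,\mean X_k)$, provided the relevant expectations exist and stay in the domain.

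The key steps, in order, are: (1) rewrite the left-hand side as
\[
\mean\tr\exp\Bigl(L+\sum_{i=1}^k H_i^* A_i H_i\Bigr)=\mean\tr\exp\Bigl(L+\sum_{i=1}^k H_i^*(\log e^{A_i})H_i\Bigr)=\mean\,\Phi(e^{A_1},\dots,e^{A_k}),
\]
where $\Phi$ is the concave trace function of Corollary~\ref{Concavity corollary 2}; (2) invoke concavity of $\Phi$ together with Jensen's inequality to pass the expectation inside, obtaining $\mean\,\Phi(e^{A_1},\dots,e^{A_k})\le\Phi(\mean e^{A_1},\dots,\mean e^{A_k})$; (3) unwind the definition of $\Phi$ on the right, namely $\Phi(\mean e^{A_1},\dots,\mean e^{A_k})=\tr\exp\bigl(L+\sum_{i=1}^k H_i^*(\log\mean e^{A_i})H_i\bigr)$, which is precisely the claimed upper bound. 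One should note that $\mean e^{A_i}$ is positive definite (being an average of positive definite matrices, assuming nondegeneracy, or at least positive semidefinite, handled by a limiting/continuity argument as used elsewhere in the paper), so it lies in the domain of $\Phi$.

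The main obstacle I expect is the justification of the Jensen step itself: Corollary~\ref{Concavity corollary 2} gives concavity in the ordinary (midpoint/convex-combination) sense, and extending this to arbitrary probability averages requires either a continuity argument reducing to finitely supported distributions — where finite concavity applies directly by induction — followed by a passage to the limit, or an appeal to the fact that a continuous concave function on a convex set automatically satisfies the integral Jensen inequality. Care is also needed if one wants to allow unbounded $A_i$: one must assume $\mean e^{A_i}$ exists, and then the trace and exponential, being continuous, interchange with the limit in the approximation argument. Once these standard measure-theoretic points are dispatched, the result is immediate.
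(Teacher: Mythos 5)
Your proposal is correct and is exactly the argument the paper uses: the author likewise deduces the corollary directly from Corollary~\ref{Concavity corollary 2} by writing $A_i=\log e^{A_i}$ and applying Jensen's inequality to the concave trace function evaluated at $e^{A_1},\dots,e^{A_k}$. The measure-theoretic caveats you raise are reasonable but are not discussed in the paper, which treats the step as immediate.
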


The result follows directly from Corollary~\ref{Concavity corollary 2} by applying Jensen's inequality, cf. also \cite[Corollary 3.3]{kn:Tropp:2012}. A simple consequence is that
\[
\mean\tr\exp\Bigl(L+\frac{A_1+\cdots+A_k}{k}\Bigr)\le \tr\exp\Bigl(L+\frac{\log\mean \displaystyle e^{A_1}+\cdots+ \log\mean  e^{A_k}}{k}\Bigr)
\]
for a fixed self-adjoint matrix $ L $ and random self-adjoint matrices $ A_1,\dots,A_k. $ 

\section{Multivariate trace inequalities}

\begin{lemma}\label{Lemma to prove Golden-Thomson type inequalities} 
Consider $ n\times m $ matrices  $ H_1,\dots,H_k $ with
\[
H_1^*H_1+\cdots+H_k^*H_k= 1_n
\]
and a self-adjoint $ n\times n $ matrix $ L. $ Then we have the inequality
\[
\begin{array}{l}
\displaystyle\tr\exp\Bigl(L+\sum_{j=1}^k H^*_j (\log B_j) H_j\Bigr)\\[1.5ex]
\le\displaystyle\tr\exp\Bigl(L+\sum_{j=1}^k H_j^*(\log A_j)H_j\Bigr)\sum_{i=1}^k H_i^*(\fd{}\log(A_i)B_i) H_i
\end{array}
\]
for positive definite $ n\times n $ matrices $ A_1,\dots,A_k $ and $ B_1,\dots,B_k\,. $
\end{lemma}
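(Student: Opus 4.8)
The plan is to deduce this Golden--Thompson-type inequality from the concavity established in Corollary~\ref{Concavity corollary 2} together with the differential inequality coming from concavity. Write
\[
\psi(A_1,\dots,A_k)=\tr\exp\Bigl(L+\sum_{j=1}^k H_j^*(\log A_j)H_j\Bigr),
\]
which by Corollary~\ref{Concavity corollary 2} is concave in $k$-tuples of positive definite matrices. A concave function lies below its tangent plane, so for positive definite $A=(A_1,\dots,A_k)$ and $B=(B_1,\dots,B_k)$ we have
\[
\psi(B_1,\dots,B_k)\le\psi(A_1,\dots,A_k)+\sum_{j=1}^k \fd{}_{A_j}\psi(A_1,\dots,A_k)(B_j-A_j).
\]
The first step is therefore to compute the partial Fréchet differentials $\fd{}_{A_j}\psi(A)(\cdot)$. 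As in the proof of Theorem~\ref{Main concavity theorem}, using $\tr\fd{}f(A)K=\tr f'(A)K$ with $f=\exp$, one gets
\[
\fd{}_{A_j}\psi(A)K=\tr\exp\Bigl(L+\sum_i H_i^*(\log A_i)H_i\Bigr)\,H_j^*\bigl(\fd{}\log(A_j)K\bigr)H_j .
\]

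The second step is to evaluate the tangent-plane inequality at the point $K=B_j-A_j$ and simplify the resulting sum. The key algebraic cancellation is that, since $\fd{}\log$ is linear in its increment,
\[
\fd{}\log(A_j)(B_j-A_j)=\fd{}\log(A_j)B_j-\fd{}\log(A_j)A_j=\fd{}\log(A_j)B_j-1_n,
\]
using the homogeneity identity $\fd{}\log(A)A=1$ (which follows from $\log(tA)=\log A+(\log t)1$, differentiated at $t=1$, or directly from the integral representation of $\fd{}\log$). Substituting this, the $-1_n$ terms contribute $-\sum_j H_j^*H_j=-1_n$ under the exponential factor because $\sum_j H_j^*H_j=1_n$, and this exactly cancels the leading $\psi(A_1,\dots,A_k)$ term on the right-hand side. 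What survives is precisely
\[
\psi(B_1,\dots,B_k)\le\tr\exp\Bigl(L+\sum_j H_j^*(\log A_j)H_j\Bigr)\sum_i H_i^*\bigl(\fd{}\log(A_i)B_i\bigr)H_i,
\]
which is the claimed inequality.

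The main obstacle I anticipate is purely bookkeeping: one must justify interchanging differentiation and the trace, verify that the partial differential formula is correct (this is already implicit in the proof of Theorem~\ref{Main concavity theorem}, so it can be cited), and above all carry out the cancellation cleanly — being careful that the factor $\exp(L+\sum_i H_i^*(\log A_i)H_i)$ is the \emph{same} operator appearing in $\psi(A_1,\dots,A_k)=\tr\exp(L+\sum_i H_i^*(\log A_i)H_i)$, so that writing $1_n$ as $\sum_j H_j^*H_j$ and absorbing it into the exponent's argument is what produces the cancellation. No genuinely hard analytic step is needed beyond what has already been proved; the content is entirely in recognising that the first-order (tangent) inequality for the concave function $\psi$, after the homogeneity substitution, reduces to the stated trace inequality.
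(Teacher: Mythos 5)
Your proposal is correct and follows essentially the same route as the paper: both arguments rest on the concavity (Corollary~\ref{Concavity corollary 2}) and positive homogeneity of $\varphi(A_1,\dots,A_k)=\tr\exp\bigl(L+\sum_j H_j^*(\log A_j)H_j\bigr)$, which together give $\varphi(B_1,\dots,B_k)\le \fd{}\varphi(A_1,\dots,A_k)(B_1,\dots,B_k)$. The only cosmetic difference is that the paper invokes its prepackaged Lemma~\ref{differential inequality} directly, whereas you rederive the same first-order inequality from the tangent-plane bound plus the Euler-type identity $\fd{}\log(A_j)A_j=1$ combined with $\sum_j H_j^*H_j=1_n$.
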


\begin{proof}
Since the trace function
\[
\varphi(A_1,\dots,A_k)=\tr\exp\Bigl(L+\sum_{j=1}^k H_j^*(\log A_j)H_j \Bigr)
\]
is concave and (positively) homogeneous, we may apply Lemma~\ref{differential inequality} and obtain the inequality
\[
d\varphi(A_1,\dots,A_k)(B_1,\dots,B_k)\ge\varphi(B_1,\dots,B_k).
\]
By applying the chain rule for Fréchet differentials we then derive
\[
\begin{array}{l}
\varphi(B_1,\dots,B_k)\le\displaystyle\ \sum_{i=1}^k d_i\varphi(A_1,\dots,A_k)B_i\\[1.5ex]
=\displaystyle\sum_{i=1}^k\tr\fd{}\exp\Bigl(L+\sum_{j=1}^k H_j^*(\log A_j)H_j \Bigr)H_i^*(\fd{}\log(A_i)B_i) H_i\\[2.5ex]
=\displaystyle\sum_{i=1}^k\tr\exp\Bigl(L+\sum_{j=1}^k H_j^*(\log A_j)H_j \Bigr)H_i^*(\fd{}\log(A_i)B_i) H_i
\end{array}
\]
and the statement follows.
\end{proof}

\begin{theorem}\label{My generalisation of the Golden-Thompson inequality}
Consider $ n\times m $ matrices  $ H_1,\dots,H_k $ with
\[
H_1^*H_1+\cdots+H_k^*H_k= 1_n\,.
\]
Then we have the inequality
\[
\tr\exp\Bigl(L+\sum_{i=1}^k H^*_i B_i H_i \Bigr)\le\tr\exp\bigl(L) \sum_{i=1}^k H_i^* (\exp B_i) H_i
\]
for arbitrary self-adjoint $ n\times n $ matrices $ L $ and $ B_1,\dots, B_k\,. $
\end{theorem}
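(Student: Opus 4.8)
The plan is to obtain the theorem as a direct specialisation of Lemma~\ref{Lemma to prove Golden-Thomson type inequalities}. In that lemma I would choose all the matrices $A_i$ equal to the identity $1_n$, and replace each positive definite matrix $B_i$ by $\exp B_i$, where $B_i$ now denotes an \emph{arbitrary} self-adjoint matrix. This substitution is legitimate, since $\exp B_i$ is positive definite for every self-adjoint $B_i$; and because $\log(\exp B_i)=B_i$ the left-hand side of Lemma~\ref{Lemma to prove Golden-Thomson type inequalities} becomes exactly
\[
\tr\exp\Bigl(L+\sum_{i=1}^k H_i^*B_iH_i\Bigr),
\]
which is the left-hand side of the theorem.

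It then remains to identify the right-hand side of the lemma after this substitution. With $A_i=1_n$ we have $\log A_i=0$, so the factor $\exp\bigl(L+\sum_j H_j^*(\log A_j)H_j\bigr)$ collapses to $\exp L$. For the remaining sum I would use the integral representation of the Fréchet differential of the logarithm derived in Section~2, namely $\fd{}\log(x)h=\int_0^\infty (x+t)^{-1}h(x+t)^{-1}\,dt$; evaluated at $x=1_n$ this gives $\fd{}\log(1_n)h=h\int_0^\infty (1+t)^{-2}\,dt=h$. Hence $\fd{}\log(A_i)B_i$, after the substitution $A_i=1_n$ and $B_i\mapsto\exp B_i$, is simply $\exp B_i$, and the right-hand side of the lemma becomes $\tr\exp(L)\sum_i H_i^*(\exp B_i)H_i$. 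This is precisely the bound asserted in the theorem, so the proof is complete.

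I do not expect any genuine obstacle here: the only line of computation needed is the evaluation $\fd{}\log(1_n)h=h$, and all the analytic substance is already carried by the concavity result of Theorem~\ref{Main concavity theorem} and by Lemma~\ref{Lemma to prove Golden-Thomson type inequalities}, which in turn rests on the differential inequality of Lemma~\ref{differential inequality} for concave positively homogeneous maps. The one point worth double-checking is the homogeneity hypothesis invoked in Lemma~\ref{Lemma to prove Golden-Thomson type inequalities}: scaling all $A_j$ by $\lambda>0$ adds $(\log\lambda)\sum_j H_j^*H_j$ inside the exponential, which equals $(\log\lambda)1_n$ only because $\sum_j H_j^*H_j=1_n$ exactly, and this central term exponentiates to the scalar factor $\lambda$; this is exactly why the present theorem (like the lemma) assumes the equality $\sum_i H_i^*H_i=1_n$ rather than the inequality used in the concavity corollaries.
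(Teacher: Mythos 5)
Your proposal is correct and follows exactly the paper's own route: specialise Lemma~\ref{Lemma to prove Golden-Thomson type inequalities} by taking every $A_i=1_n$ (so that $\log A_i=0$ and $\fd{}\log(1_n)h=h$) and then substitute $\exp B_i$ for $B_i$. The extra check of the homogeneity hypothesis is a sensible observation but adds nothing beyond what the paper already relies on.
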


\begin{proof}
Choose in Lemma~\ref{Lemma to prove Golden-Thomson type inequalities} for $ i=1,\dots,k $  the matrix $ A_i $ as the identity matrix. Then the Fréchet differential $ \fd{}\log(A_i)B_i=B_i $ and $ \log A_i=0. $ The result then follows by replacing $ B_i $ with $ \exp B_i $ for $ i=1,\dots,k. $
\end{proof}

The above inequality is a direct generalisation of the Golden-Thompson inequality. Indeed, if we put $ k=1 $ and take $ H_1 $ as the identity matrix then the inequality in Theorem~\ref{My generalisation of the Golden-Thompson inequality} takes the form
\[
\tr e^{L+B}\le \tr e^L e^B,
\]
cf. \cite{kn:golden:1965, kn:thompson:1965, kn:lieb:1973:1}. We may obtain other corollaries of Lemma~\ref{Lemma to prove Golden-Thomson type inequalities} .

\begin{theorem}\label{Generalised Golden-Thompson inequality}
Consider $ n\times m $ matrices  $ H_1,\dots,H_k $ with
\[
H_1^*H_1+\cdots+H_k^*H_k= 1_n\,.
\]
Then we have the inequality
\[
\tr\exp\Bigl(\sum_{i=1}^k H^*_i (\log B_i-\log A_i) H_i \Bigr)\le\sum_{i=1}^k\tr H_i^*(\fd{}\log(A_i)B_i) H_i
\]
for positive definite $ n\times n $ matrices $ A_1,\dots,A_k $ and $ B_1,\dots,B_k\,. $
\end{theorem}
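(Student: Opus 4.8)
The plan is to reduce this statement to Lemma~\ref{Lemma to prove Golden-Thomson type inequalities} by a clever choice of the free parameter $L$. In that lemma we may take $L=0$, which is permitted since the zero matrix is self-adjoint; this yields
\[
\tr\exp\Bigl(\sum_{j=1}^k H^*_j (\log B_j) H_j\Bigr)\le\tr\exp\Bigl(\sum_{j=1}^k H_j^*(\log A_j)H_j\Bigr)\sum_{i=1}^k H_i^*(\fd{}\log(A_i)B_i) H_i.
\]
The right-hand side is not yet of the claimed form, so the next step is to dispose of the factor $\exp\bigl(\sum_j H_j^*(\log A_j)H_j\bigr)$.

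The natural device is a substitution that absorbs the $A_i$. Since $\log$ and $\fd{}\log$ behave well under scaling and the hypothesis puts no restriction beyond positive definiteness, I would replace $B_i$ by a rescaled matrix so that the $A_i$-dependent exponential becomes the identity up to the factor one wants. Concretely, fix positive definite $A_1,\dots,A_k$; applying the displayed inequality with $B_i$ replaced by $A_i^{1/2}C_iA_i^{1/2}$ (or more simply noting the inequality already holds for the ratio structure) and then relabelling should collapse the prefactor. Alternatively, and perhaps more transparently, I would run the argument of Lemma~\ref{Lemma to prove Golden-Thomson type inequalities} directly with $L$ chosen as $-\sum_j H_j^*(\log A_j)H_j$ \emph{inside} the exponential — but that is not legitimate because $L$ in that lemma is an additive external term, whereas here the $\log A_j$ terms are conjugated by the $H_j$. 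So the honest route is the substitution route: apply the $L=0$ inequality to the pair $(A_i, B_i)$ and separately observe, via the chain rule as in Lemma~\ref{Lemma to prove Golden-Thomson type inequalities}, that the left side equals $\tr\exp\bigl(\sum_i H_i^*(\log B_i-\log A_i)H_i\bigr)$ only after a further manipulation.

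Reflecting further, the cleanest derivation uses homogeneity directly. The trace function $\varphi(A_1,\dots,A_k)=\tr\exp\bigl(\sum_j H_j^*(\log A_j)H_j\bigr)$ is concave and positively homogeneous of degree one (a special case, $L=0$, of Corollary~\ref{Concavity corollary 2} together with the homogeneity visible from $\log(tA)=\log t+\log A$ and $\sum_j H_j^*H_j=1_n$). By Lemma~\ref{differential inequality} in the form used in Lemma~\ref{Lemma to prove Golden-Thomson type inequalities},
\[
\varphi(B_1,\dots,B_k)\le\sum_{i=1}^k\tr\exp\Bigl(\sum_{j=1}^k H_j^*(\log A_j)H_j\Bigr)H_i^*(\fd{}\log(A_i)B_i)H_i.
\]
Now I would specialise to $A_i$ chosen so that $\sum_j H_j^*(\log A_j)H_j=\sum_j H_j^*(\log A_j)H_j$ is conjugate-matched to make the exponential the identity — the simplest such choice being $A_j$ a scalar multiple of the identity is too weak, so instead I keep $A_i$ general but apply the inequality to the \emph{rescaled} data $\widetilde B_i = A_i^{1/2}B_i A_i^{1/2}$ is again not quite right since $\log$ does not split over products.

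The correct and simple observation is this: the inequality of Lemma~\ref{Lemma to prove Golden-Thomson type inequalities} with $L=0$ is \emph{already} the claimed inequality after dividing out — no, one cannot divide a trace. Hence the genuine step is to \textbf{iterate}: replace $(L,B_j)$ in Lemma~\ref{Lemma to prove Golden-Thomson type inequalities} by $\bigl(-\sum_j H_j^*(\log A_j)H_j,\ B_j\bigr)$ is illegal; so one instead replaces $L$ by $0$ and $B_j$ by $B_j$ while replacing the role of the ``identity'' reference point. I would therefore present the proof as: apply Lemma~\ref{Lemma to prove Golden-Thomson type inequalities} with $L=0$ to get the displayed bound; then bound $\tr\exp(\,\cdot\,)M\le(\tr\exp(\,\cdot\,))\cdot\|M\|$ is too lossy. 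The main obstacle is exactly this prefactor $\exp\bigl(\sum_j H_j^*(\log A_j)H_j\bigr)$: the clean resolution is to note that Theorem~\ref{Generalised Golden-Thompson inequality} is the special case $L=0$ of Lemma~\ref{Lemma to prove Golden-Thomson type inequalities} \emph{composed with} the substitution $A_i\mapsto 1_n$ on the left only — which works because one may first replace $B_i$ by $B_iA_i^{-1}$-type quantities. I expect the actual paper does something quite short here, and the hard part of reconstructing it is identifying the one-line substitution (likely: apply the lemma with $L=0$, then use that $\exp\bigl(\sum_j H_j^*(\log A_j)H_j\bigr)$ can be moved by a further application of concavity/homogeneity, or simply that replacing $B_j$ by $B_j$ and $A_j$ by $1_n$ in the \emph{left} term after the inequality is already established gives the result since $\fd{}\log(1_n)B_j=B_j$). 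I would write it up by first stating the $L=0$ case explicitly, then performing the substitution $A_j\rightsquigarrow 1_n$ guided by Theorem~\ref{My generalisation of the Golden-Thompson inequality}, and checking the two sides match.
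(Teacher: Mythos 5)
Your proposal never arrives at a valid argument, and the reason is that you explicitly reject the one step that works. You write that choosing $L=-\sum_j H_j^*(\log A_j)H_j$ in Lemma~\ref{Lemma to prove Golden-Thomson type inequalities} ``is not legitimate because $L$ in that lemma is an additive external term, whereas here the $\log A_j$ terms are conjugated by the $H_j$.'' That objection is misplaced: the quantity $-\sum_j H_j^*(\log A_j)H_j$ is itself a single self-adjoint $n\times n$ matrix, which is all the lemma requires of $L$. The lemma is stated for an arbitrary self-adjoint $L$ and arbitrary positive definite $A_1,\dots,A_k$, $B_1,\dots,B_k$, so there is no circularity in fixing the $A_j$ first, setting $L$ equal to that particular matrix, and then evaluating the lemma's inequality at those same $A_j$. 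With this choice the argument of the exponential on the left becomes $\sum_j H_j^*(\log B_j-\log A_j)H_j$, while on the right $L+\sum_j H_j^*(\log A_j)H_j=0$, so the prefactor $\exp(0)=1_n$ disappears and one obtains exactly
\[
\tr\exp\Bigl(\sum_{i=1}^k H^*_i (\log B_i-\log A_i) H_i \Bigr)\le\sum_{i=1}^k\tr H_i^*(\fd{}\log(A_i)B_i) H_i.
\]
This one-line substitution is precisely the paper's proof.

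By contrast, the routes you do pursue all fail, and you largely acknowledge this yourself: taking $L=0$ gives the wrong left-hand side (it has $\log B_j$ rather than $\log B_j-\log A_j$) together with an unwanted prefactor $\exp\bigl(\sum_j H_j^*(\log A_j)H_j\bigr)$ that cannot be divided out under the trace; the rescalings $B_i\mapsto A_i^{1/2}B_iA_i^{1/2}$ do not interact with $\log$ in any usable way for noncommuting matrices; and the final suggestion of ``substituting $A_j\rightsquigarrow 1_n$'' after applying the $L=0$ case only reproduces Theorem~\ref{My generalisation of the Golden-Thompson inequality}, not the statement at hand. The gap is therefore not a technical slip but the decision to rule out the correct choice of $L$; once that choice is admitted, the proof is immediate.
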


\begin{proof} The result follows by setting
\[
L=-\bigl(H_1^*(\log A_1) H_1+\cdots+H_k^*(\log A_k) H_k\bigr)
\]
in Lemma~\ref{Lemma to prove Golden-Thomson type inequalities}.
\end{proof}

If we in Theorem~\ref{Generalised Golden-Thompson inequality} put $ k=1 $ and take $ H_1 $ as the unit matrix we obtain
\[
\tr\exp(\log B-\log A)\le \tr\fd{}\log(A) B=\tr A^{-1}B
\]
which is the Golden-Thompson inequality.\\[1ex] 
Furthermore, if $ A_i $ and $ B_i $ commute for $ i=1,\dots,k $ then 
Theorem~\ref{Generalised Golden-Thompson inequality} reduces to the inequality
\[
\tr\exp\Bigl(\sum_{i=1}^k H^*_i (\log B_i-\log A_i) H_i \Bigr)\le\sum_{i=1}^k\tr H_i^* B_i A_i^{-1}  H_i
\]
which is an expression of operator convexity of the exponential function under the trace.

\begin{theorem}\label{Generalised extended Golden-Thompson inequality}
Consider $ n\times m $ matrices  $ H_1,\dots,H_k $ with
\[
H_1^*H_1+\cdots+H_k^*H_k= 1_n\,.
\]
Then we have the inequality
\[
\begin{array}{l}
\displaystyle\tr\exp\Bigl(\sum_{i=1}^k H^*_i (\log B_i+\log C_i-\log A_i) H_i \Bigr)\\[1.5ex]
\le\displaystyle\tr\exp\Bigl(\sum_{i=1}^k H_i^*(\log C_i)H_i \Bigr) \sum_{i=1}^k H_i^*(\fd{}\log(A_i)B_i) H_i
\end{array}
\]
for positive definite $ n\times n $ matrices $ A_1,\dots,A_k, B_1,\dots,B_k $ and $ C_1,\dots,C_k\,. $
\end{theorem}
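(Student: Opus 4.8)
The plan is to recognise Theorem~\ref{Generalised extended Golden-Thompson inequality} as one more specialisation of Lemma~\ref{Lemma to prove Golden-Thomson type inequalities}, in the same spirit as Theorems~\ref{My generalisation of the Golden-Thompson inequality} and~\ref{Generalised Golden-Thompson inequality}. The free self-adjoint matrix $ L $ appearing in the Lemma should be chosen so as to absorb simultaneously the correction term $ -\log A_i $ and the new terms $ \log C_i $.

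Concretely, I would apply Lemma~\ref{Lemma to prove Golden-Thomson type inequalities} with the choice
\[
L=\sum_{i=1}^k H_i^*\bigl(\log C_i-\log A_i\bigr) H_i .
\]
Since each $ A_i $ and $ C_i $ is positive definite, $ \log A_i $ and $ \log C_i $ are self-adjoint, hence $ L $ is a self-adjoint $ n\times n $ matrix and, together with the standing assumption $ H_1^*H_1+\cdots+H_k^*H_k=1_n $, the hypotheses of the Lemma are satisfied. Substituting this $ L $ into the left-hand side of the Lemma gives
\[
L+\sum_{j=1}^k H_j^*(\log B_j)H_j=\sum_{i=1}^k H_i^*\bigl(\log B_i+\log C_i-\log A_i\bigr)H_i ,
\]
which is precisely the exponent on the left-hand side of the Theorem. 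On the right-hand side of the Lemma the $ -\log A_i $ contributions coming from $ L $ cancel against the terms $ \sum_j H_j^*(\log A_j)H_j $ standing inside the exponential, leaving $ \sum_i H_i^*(\log C_i)H_i $; hence the right-hand side becomes
\[
\tr\exp\Bigl(\sum_{i=1}^k H_i^*(\log C_i)H_i\Bigr)\sum_{i=1}^k H_i^*\bigl(\fd{}\log(A_i)B_i\bigr) H_i ,
\]
which is exactly the right-hand side of the Theorem. This completes the argument.

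There is essentially no hard step here: all the analytic content has already been invested in Lemma~\ref{Lemma to prove Golden-Thomson type inequalities}, which in turn rests on the concavity result Theorem~\ref{Main concavity theorem} and on the differential inequality Lemma~\ref{differential inequality}. The only points requiring a little care are the bookkeeping of which terms are absorbed into $ L $ and which survive, and the verification that $ L $ is genuinely self-adjoint so that the Lemma applies verbatim. As a consistency check one may note that the choice $ C_i=1_n $ for all $ i $ recovers Theorem~\ref{Generalised Golden-Thompson inequality}, while the choice $ A_i=C_i $ gives $ L=0 $ and returns the statement of Lemma~\ref{Lemma to prove Golden-Thomson type inequalities} with vanishing $ L $.
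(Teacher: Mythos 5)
Your proof is correct and is essentially identical to the paper's own argument: the author likewise obtains the theorem by substituting $L=\sum_{i=1}^k H_i^*(\log C_i-\log A_i)H_i$ into Lemma~\ref{Lemma to prove Golden-Thomson type inequalities}. Your additional bookkeeping of the cancellation and the consistency checks are accurate but not needed beyond what the paper states.
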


\begin{proof}
The result follows by setting 
\[
L=H_1^*(\log C_1-\log A_1)H_1+\cdots+H_k^*(\log C_k-\log A_k)H_k 
\]
in Lemma~\ref{Lemma to prove Golden-Thomson type inequalities}.
\end{proof}

If we in Theorem~\ref{Generalised extended Golden-Thompson inequality} put $ k=1 $ and take $ H_1 $ as the unit matrix we
obtain
\[
\tr\exp(\log B+\log C-\log A)\le \tr C\, d\log(A)B,
\]
which is the extended Golden-Thompson inequality. The extended Golden-Thompson inequality reduces to the Golden-Thompson inequality if $ A $ and $ B $ commute.

{\small
 

\vfill

\noindent Frank Hansen: Institute for Excellence in Higher Education, Tohoku University, Japan.\\
Email: frank.hansen@m.tohoku.ac.jp.
      }

\end{document}